\newtheorem{theorem}{\noindent{\it Theorem}}[section]
\newtheorem{lemma}[theorem]{\noindent{\it Lemma}}
\newtheorem{corollary}[theorem]{\noindent{\it Corollary}}
\newenvironment{proof}{\noindent{\it Proof:}}{$\hfill$ $\Box$\\ }
\newtheorem{example}{\noindent{\it Example}}[section]
\begin{document}

\title{Quantum codes derived from cyclic codes}

\author{Giuliano G. La Guardia
\thanks{Giuliano Gadioli La Guardia is with Department of Mathematics and Statistics,
State University of Ponta Grossa (UEPG), 84030-900, Ponta Grossa,
PR, Brazil. }}

\maketitle

\begin{abstract}
In this note, we present a construction of new nonbinary quantum
codes with good parameters. These codes are obtained by applying the
Calderbank-Shor-Steane (CSS) construction. In order to do this, we
show the existence of (classical) cyclic codes whose defining set
consists of only one cyclotomic coset containing at least two
consecutive integers.
\end{abstract}

\section{Introduction}

The class of cyclic codes is well-known in the literature
\cite{Macwilliams:1977,huffman03}. Recently, it has been extensively
employed in the construction of quantum codes
\cite{Calderbank:1998,Nielsen:2000,Ketkar:2006,laguardia:2009,laguardia:2011,laguardia:2014,Qian:2017}.
Let $q$ be a prime power. Recall that a $q$-ary quantum code
${\mathbb Q}$ of length $n$ is a $K$-dimensional subspace of the
$q^{n}$-dimensional Hilbert space ${({\mathbb C}^{q})}^{\otimes n}$,
where $\otimes n$ denotes the tensor product of vector spaces. If
$K=q^{k}$ we write $[[n, k, d]]_{q}$ to denote a $q$-ary quantum
code of length $n$ and minimum distance $d$. Let $[[n, k, d]]_{q}$
be a quantum code. The Quantum Singleton Bound (QSB) asserts that $k
+ 2d \leq n + 2$. If the equality holds then the code is called a
maximum distance separable (MDS) code. For more details on quantum
codes, the reader can consult \cite{Nielsen:2000,Ketkar:2006}.

In this note, we present constructions of new quantum codes by
applying the well-known CSS construction. In order to do this, we
show the existence of (classical) cyclic codes whose defining set
consists of only one cyclotomic coset containing at least two
consecutive integers. This fact induces the construction of quantum
codes with good parameters. More precisely, the code parameters are
good in the sense of the Singleton bound.

The paper is arranged as follows. In Section~\ref{sec2}, some
preliminaries results are provided. In Section~\ref{sec3}, we
present the contributions of the paper, i.e., constructions of new
quantum codes derived from (classical) cyclic codes. In
Section~\ref{sec4}, we give some examples of the new codes and we
compare the new code parameters with the ones available in the
literature. Finally, in Section~\ref{sec5}, the final remarks are
drawn.

\section{Basic Concepts}\label{sec2}

As usual, ${\mathbb F}_{q}$ represents a finite field with $q$
elements, where $q$ is a prime power. The parameters of a linear
code over ${\mathbb F}_{q}$ are denoted by $[n, k, d]_{q}$, where
$n$ is the code length, $k$ is the dimension and $d$ is the minimum
distance of the code. If $C$ is a linear code then $C^{\perp}$
denotes its (Euclidean) dual code. In this paper, we assume that
$\gcd(n, q) =1$ (simple root cyclic codes). The multiplicative order
of $q$ modulo $n$ is denoted by $m={\operatorname{ord}}_n(q)$. The
$q$-cyclotomic coset ($q$-coset for short) of $s$, modulo $n$, is
defined as $C_{s} =\{s, sq, \ldots , sq^{m_{s}-1} \}$, where $m_{s}$
is the smallest positive integer such that $sq^{m_{s}}\equiv s \bmod
n$. A primitive $n$th root of unity is denoted by $\alpha$.

Let $R_{n}={\mathbb F}_q [x]/(x^{n}-1)$ be the quotient ring of
polynomials modulo $(x^{n}-1)$. A cyclic code $C$ of length $n$ is a
non zero ideal in $R_{n}$. There exists only one polynomial $g(x)$
with minimal degree in $C$ such that $g(x)$ is a generator
polynomial of $C$, where $g(x)$ is a factor of $x^{n}-1$. The
dimension of $C$ equals $n - \deg(g(x))$. The dual of a cyclic code
is also cyclic. Recall the well-known BCH bound:

\begin{theorem}\label{CC}\cite[Pg. 201]{Macwilliams:1977}
(The BCH bound) Let $C$ be a cyclic code with generator polynomial
$g(x)$ such that, for some integers $b\geq 0$, $\delta\geq 1$, and
for $\alpha$ belongs to some extension field of ${\mathbb F}_{q}$,
we have $g({\alpha}^{b}) = g({\alpha}^{b+1})= \ldots =
g({\alpha}^{b+\delta-2})=0$, i.e., the code has a sequence of
$\delta-1$ consecutive powers of $\alpha$ as zeros. Then the minimum
distance of $C$ is, at least, $\delta$.
\end{theorem}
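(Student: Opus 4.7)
The plan is to prove the contrapositive: assuming a nonzero codeword $c(x) \in C$ has Hamming weight $w \leq \delta - 1$, derive a contradiction by showing $c(x)$ must actually be zero. This is the classical argument via a Vandermonde system.

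First I would write $c(x) = \sum_{j \in S} c_j x^j$, where $S \subseteq \{0, 1, \ldots, n-1\}$ is the support of $c$ and $|S| = w \leq \delta - 1$. Since $g(x) \mid c(x)$ and $\alpha^{b}, \alpha^{b+1}, \ldots, \alpha^{b+\delta-2}$ are zeros of $g(x)$, they are also zeros of $c(x)$. Evaluating $c$ at these $\delta - 1$ points yields the homogeneous linear system
$$\sum_{j \in S} c_j \, \alpha^{(b+i)j} = 0, \qquad i = 0, 1, \ldots, \delta - 2,$$
in the unknowns $\{c_j\}_{j \in S}$.

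Next I would select any $w$ of these equations (say $i = 0, 1, \ldots, w-1$, which is possible since $w \leq \delta - 1$) to obtain a square $w \times w$ system whose coefficient matrix $M$ has entries $M_{ij} = \alpha^{(b+i)j}$. Factoring out $\alpha^{bj}$ from column $j$ gives $M = V \cdot D$, where $D = \mathrm{diag}(\alpha^{bj} : j \in S)$ and $V$ is the Vandermonde-type matrix with $V_{ij} = (\alpha^j)^i$. The key observation is that the nodes $\alpha^j$ for $j \in S$ are pairwise distinct, because $\alpha$ is a primitive $n$th root of unity and the $j$'s lie in $\{0, 1, \ldots, n-1\}$; hence $\det V \neq 0$. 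Also $\det D \neq 0$ since each $\alpha^{bj} \neq 0$. Therefore $M$ is invertible, forcing $c_j = 0$ for every $j \in S$, contradicting $c(x) \neq 0$.

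I do not anticipate any serious obstacle; the only point requiring mild care is justifying that the $\alpha^j$ are distinct (which uses $\gcd(n,q) = 1$ and that $\alpha$ is primitive of order $n$) and that we may legitimately extract a square subsystem of the correct size, which is exactly where the hypothesis $w \leq \delta - 1$ is consumed.
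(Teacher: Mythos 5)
Your argument is correct: it is the classical Vandermonde-determinant proof of the BCH bound, which is exactly the proof in the reference (MacWilliams--Sloane) that the paper cites for this theorem rather than proving it itself. The key points --- extracting a $w\times w$ subsystem using $w\le\delta-1$, factoring out the diagonal of $\alpha^{bj}$, and using distinctness of the nodes $\alpha^{j}$ (from $\alpha$ having order $n$) to get a nonzero Vandermonde determinant --- are all handled properly, so nothing further is needed.
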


The Calderbank-Shor-Steane (CSS) quantum code construction is
well-known in the literature (see
\cite{Nielsen:2000,Calderbank:1998}). In the case of the classical
code to be dual-containing, the CSS construction reads as follows:

\begin{lemma}\cite[Lemma 17]{aly}\label{css}
If there exists a classical linear $[n, k, d]_{q}$ code $C$ such
that $C^{\perp} \subset C$, then there exists an $[[n, 2k - n, \geq
d]]_{q}$ stabilizer code that is pure to d.
\end{lemma}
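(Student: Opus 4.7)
The plan is to construct the stabilizer explicitly from a parity check matrix of $C$ and then read off the parameters. Let $H$ be an $(n-k)\times n$ parity check matrix of $C$; its rows span $C^{\perp}$. For every row $h=(h_{1},\ldots ,h_{n})$ of $H$, introduce two generalized Pauli operators on $({\mathbb C}^{q})^{\otimes n}$, namely $X(h)=X^{h_{1}}\otimes\cdots\otimes X^{h_{n}}$ and $Z(h)=Z^{h_{1}}\otimes\cdots\otimes Z^{h_{n}}$, where $X$ is the generalized shift and $Z$ the clock operator on ${\mathbb C}^{q}$. Let $S$ be the subgroup of the Pauli group generated by the $2(n-k)$ operators $\{X(h_{i}),Z(h_{i})\}$, and let ${\mathbb Q}$ be the joint $+1$-eigenspace of $S$.

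The first step is commutativity of $S$. All $X$-type generators commute among themselves and similarly all $Z$-type generators do; the mixed relation is $X(h)Z(h')=\omega^{\langle h,h'\rangle}Z(h')X(h)$ with $\omega$ a primitive $q$-th root of unity. Since every row of $H$ lies in $C^{\perp}$ and the hypothesis $C^{\perp}\subset C$ is equivalent to $C^{\perp}\subset (C^{\perp})^{\perp}$, we get $\langle h,h'\rangle =0$ for every pair of rows, so $S$ is abelian. A standard independence argument (the rows of $H$ are linearly independent over ${\mathbb F}_{q}$) then yields $|S|=q^{2(n-k)}$, whence $\dim_{{\mathbb C}}{\mathbb Q}=q^{n}/|S|=q^{2k-n}$, matching the advertised ${\mathbb Q}$-dimension.

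Next I would establish the distance. The undetectable errors are exactly those $E=X(a)Z(b)$ which commute with every element of $S$ but do not themselves lie in (the image in the Pauli group of) $S$. The commutation condition forces $Ha^{T}=Hb^{T}=0$, i.e.\ $a,b\in C$; the condition $E\notin S$ forces $(a,b)\notin C^{\perp}\times C^{\perp}$. Hence at least one of $a,b$ is a nonzero element of $C$, and its Hamming weight is $\geq d$. Since the weight of $E$ is at least the weight of either component, every logical (undetectable, non-stabilizer) error has weight $\geq d$, proving the minimum distance bound. Purity to $d$ follows from the same computation: every nonidentity element of $S$ corresponds to a nonzero codeword of $C^{\perp}\subset C$, which again has weight $\geq d$, so there are no low-weight detectable errors acting trivially.

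The main obstacle is bookkeeping rather than ideas: the nonbinary generalized Pauli commutation relations must be set up carefully, the independence of the $2(n-k)$ generators justified, and the dichotomy between stabilizer elements and logical operators (the quotient ${\rm N}(S)/S$) kept straight so that the bound $\geq d$ is applied to the right coset. Once these points are handled, all claimed parameters $[[n,2k-n,\geq d]]_{q}$ and purity follow at once.
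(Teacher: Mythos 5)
This lemma is not proved in the paper at all: it is quoted verbatim from the cited reference (Aly--Klappenecker, Lemma 17), so the only meaningful comparison is with the standard literature proof, and your proposal is exactly that standard route --- build the stabilizer from $X$- and $Z$-type operators attached to a parity check matrix of $C$ (equivalently, to generators of $C^{\perp}$), use $C^{\perp}\subset C$ for commutativity, count $\dim{\mathbb Q}=q^{n}/|S|$, and bound the weight of elements of ${\rm N}(S)\setminus S$ by the minimum weight of $C$. The skeleton and the distance/purity dichotomy are handled correctly: commuting errors correspond to pairs $(a,b)$ with $a,b\in C$, non-stabilizer ones have a component in $C\setminus C^{\perp}$, hence weight $\geq d$, and nontrivial stabilizer elements come from nonzero words of $C^{\perp}\subset C$, giving purity to $d$.

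The one point you wave at as ``bookkeeping'' is, however, exactly where the statement as you wrote it is false for non-prime $q$, and the paper applies the lemma with $q=9,27,32,64$. Over ${\mathbb F}_{q}$ with $q=p^{m}$ each operator $X(h)$, $Z(h)$ has order $p$, so the group generated by your $2(n-k)$ operators has order (modulo phases) at most $p^{2(n-k)}$, not $q^{2(n-k)}$; moreover the commutation phase is $\omega_{p}^{\operatorname{tr}_{{\mathbb F}_{q}/{\mathbb F}_{p}}\langle a,b\rangle}$, so commuting with your listed generators only forces trace-orthogonality to the rows of $H$, which does not give $Ha^{T}=Hb^{T}=0$. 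The standard fix is to take as generators $X(c),Z(c)$ for $c$ running over an ${\mathbb F}_{p}$-basis of $C^{\perp}$ (equivalently, $X(\gamma_{j}h_{i}),Z(\gamma_{j}h_{i})$ with $\{\gamma_{j}\}$ a basis of ${\mathbb F}_{q}$ over ${\mathbb F}_{p}$), which restores $|S|=p^{2m(n-k)}=q^{2(n-k)}$, and then to use nondegeneracy of the trace form: $\operatorname{tr}\langle b,\gamma c\rangle=0$ for all $\gamma\in{\mathbb F}_{q}$ and $c\in C^{\perp}$ does imply $\langle b,c\rangle=0$, i.e.\ $b\in C$. With that repair your argument goes through and yields all the claimed parameters $[[n,2k-n,\geq d]]_{q}$ and purity to $d$; as written, the dimension count and the normalizer computation are only valid for prime $q$.
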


\section{The New Codes}\label{sec3}

In this section we show how to guarantee the existence of cyclic
codes whose defining set contains only one $q$-coset containing at
least two consecutive integers. This fact produces conditions to
construct quantum codes with good parameters (in the sense of the
QSB). Theorem~\ref{teo1}, given in the following, is the main result
of this note.

\begin{theorem}\label{teo1}
Let $q\geq 3$ be a prime power and $n
> m$ be a positive integer such that $\gcd (q, n)=1$ and $\gcd (q^{a_i}-1, n)=1$
for every $i=1, 2, \ldots, r$, where $m={\operatorname{ord}}_n(q)
\geq r+2$ and $1 \leq r, a_1, a_2, \ldots , a_r < m$ are integers.
If $n|\gcd(t_2 , \ldots , t_r)$, where $t_j
=[(j-(j-1)q^{a_j}){(q^{a_j}-1)}^{-1}-{(q^{a_1}-1)}^{-1}]$ for every
$j=2,\ldots , r$ (the operations are performed modulo $n$), then
there exists an $[n, n- m^{*}, d\geq r+2]_{q}$ cyclic code, where
$m^{*}$ is the cardinality of the $q$-coset containing $r+1$
consecutive integers.
\end{theorem}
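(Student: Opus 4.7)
The plan is to exhibit a single $q$-cyclotomic coset modulo $n$ that contains $r+1$ consecutive integers, and then apply the BCH bound (Theorem~\ref{CC}). If such a coset $C_b$ exists, then taking it as the defining set of a cyclic code $C$ produces a generator polynomial $g(x)=\prod_{i\in C_b}(x-\alpha^i)$ of degree $|C_b|=m^{*}$, yielding a code of dimension $n-m^{*}$ whose zero set contains $r+1$ consecutive powers of $\alpha$, so Theorem~\ref{CC} forces $d\geq r+2$.

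To build such a coset, I would seek an integer $b$ modulo $n$ satisfying the $r$ congruences
\[
(b+j-1)\, q^{a_j} \equiv b+j \pmod{n}, \qquad j=1,2,\ldots,r.
\]
The $j$-th congruence says that multiplying the element $b+j-1$ by the power $q^{a_j}$ lands on $b+j$; together, they force $b+j\in C_b$ for every $j\in\{0,1,\ldots,r\}$, giving the desired $r+1$ consecutive integers inside a common coset.

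Each congruence rearranges as $b(q^{a_j}-1)\equiv j-(j-1)q^{a_j}\pmod n$, and since $\gcd(q^{a_j}-1,n)=1$ by hypothesis, each one has the unique solution
\[
b \equiv \bigl(j-(j-1)q^{a_j}\bigr)(q^{a_j}-1)^{-1}\pmod{n}.
\]
The case $j=1$ produces $b\equiv (q^{a_1}-1)^{-1}\pmod n$, and the discrepancy between the $j$-th expression and the $j=1$ expression is, by definition, exactly $t_j$. Hence the compatibility of the whole system is equivalent to $n\mid t_j$ for every $j=2,\ldots,r$, which is precisely the standing hypothesis. I would therefore take $b\equiv (q^{a_1}-1)^{-1}\pmod n$ and verify each congruence mechanically using $n\mid t_j$.

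With this $b$ fixed, the residues $b,b+1,\ldots,b+r$ are pairwise distinct modulo $n$ (because $n>m\geq r+2$), so the coset $C_b$ has cardinality at least $r+1$, and we call this cardinality $m^{*}$. The cyclic code with defining set $C_b$ has dimension $n-m^{*}$, while $g(\alpha^{b+i})=0$ for $i=0,1,\ldots,r$ supplies the $r+1$ consecutive zeros needed to invoke Theorem~\ref{CC} and conclude $d\geq r+2$. I expect no genuine obstacle: the whole argument reduces to the observation that $t_j$ is, by construction, the difference between two candidate values of $b$, so the hypothesis $n\mid\gcd(t_2,\ldots,t_r)$ is simply the consistency of a linear system whose solution is the desired starting point of the coset.
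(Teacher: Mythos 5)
Your proposal is correct and follows essentially the same route as the paper: you set up the same system of congruences $(b+j-1)q^{a_j}\equiv b+j \pmod n$, reduce each to $b\equiv(j-(j-1)q^{a_j})(q^{a_j}-1)^{-1}$ using $\gcd(q^{a_j}-1,n)=1$, identify the compatibility condition with $n\mid\gcd(t_2,\ldots,t_r)$, and conclude via the coset $C_b$ and the BCH bound. Your write-up is, if anything, slightly more explicit than the paper's in noting why the congruences place $b,b+1,\ldots,b+r$ in a single $q$-coset and why these residues are distinct modulo $n$.
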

\begin{proof}
We want to investigate the following system of congruences
\begin{eqnarray*}
xq^{a_1}\equiv (x+1) \mod n\\
(x+1) q^{a_2} \equiv (x+2) \mod n\\
(x+2) q^{a_3} \equiv (x+3) \mod n\\
\vdots\\
(x+r-1) q^{a_r} \equiv (x+r) \mod n,\\
\end{eqnarray*}
where $1\leq r, a_1 , a_2, \ldots, a_r < m$. Since $\gcd (q^{a_i}-1,
n)=1$ for every $i=1, 2, \ldots, r$, it follows that the above
system is equivalent to
\begin{eqnarray*}
x \equiv {(q^{a_1}-1)}^{-1} \mod n \\
x \equiv (2- q^{a_2}){(q^{a_2}-1)}^{-1} \mod n\\
x \equiv (3- 2q^{a_3}){(q^{a_3}-1)}^{-1} \mod n\\
\vdots\\
x \equiv [r -(r-1)q^{a_r}]{(q^{a_r}-1)}^{-1} \mod n,\\
\end{eqnarray*}
where ${(q^{a_i}-1)}^{-1}$ denotes the multiplicative inverse of
$(q^{a_i}-1)$ modulo $n$.

The system has a solution if and only if
$$[j -(j-1)q^{a_j}]{(q^{a_j}-1)}^{-1} \equiv
[i -(i-1)q^{a_i}]{(q^{a_i}-1)}^{-1} (\mod n)$$ for all $i, j = 2,
\ldots , r$ and
$${(q^{a_1}-1)}^{-1} \equiv [i
-(i-1)q^{a_i}]{(q^{a_i}-1)}^{-1} (\mod n)$$ for all $i = 2, \ldots ,
r$. This means that
$$n|[(j-(j-1)q^{a_j}){(q^{a_j}-1)}^{-1}-{(q^{a_1}-1)}^{-1}]$$ for
every $j=2,\ldots , r$, i.e., $n|\gcd(t_2 , \ldots , t_r)$, where
$t_j =[(j-(j-1)q^{a_j}){(q^{a_j}-1)}^{-1}$ $-{(q^{a_1}-1)}^{-1}]$
for all $j=2,\ldots , r$.

Let $C$ be the cyclic code whose defining is the $q$-coset $C_{x}$.
From construction, the defining set of $C$, i.e., the coset $C_{x}$,
contains the sequence $ x, x+1, \ldots, x + r$ of $r+1$ consecutive
integers. From the BCH bound, the minimum distance $d$ of $C$
satisfies $d \geq r+2$. Since $|C_{x}|=m^{*}$, the dimension of $C$
equals $n - m^{*}$. Then, one can get an $[n, n - m^{*}, d\geq
r+2]_{q}$ code, as required.
\end{proof}

\begin{corollary}\label{cor1}
Let $q\geq 3$ be a prime power and $n
> m$ be a prime number such that $\gcd (q, n)=1$, where $m={\operatorname{ord}}_n(q)
\geq r+2$ and $1 \leq r, a_1, a_2, \ldots , a_r < m$ are integers.
If $n|\gcd(t_2 , \ldots , t_r)$, where $t_j
=[(j-(j-1)q^{a_j}){(q^{a_j}-1)}^{-1}-{(q^{a_1}-1)}^{-1}]$ for every
$j=2,\ldots , r$ and $a_1, a_2, \ldots , a_r$ are integers such that
$ 1\leq a_1 + a_2 + \ldots + a_r < m$ (the operations are performed
modulo $n$), then there exists an $[n, n- m^{*}, d\geq r+2]_{q}$
cyclic code.
\end{corollary}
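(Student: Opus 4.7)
The plan is to deduce Corollary~\ref{cor1} directly from Theorem~\ref{teo1} by showing that the only hypothesis of the theorem which is not explicitly repeated in the corollary, namely $\gcd(q^{a_i}-1,n)=1$ for each $i$, becomes automatic once $n$ is prime and $1 \le a_i < m$. So the whole argument will hinge on one elementary observation about $\mathbb{Z}/n\mathbb{Z}$ when $n$ is prime.

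First I would record the following: since $n$ is prime, $\gcd(q^{a_i}-1,n)$ is either $1$ or $n$, and equals $n$ exactly when $q^{a_i}\equiv 1\pmod n$. By the very definition of $m=\operatorname{ord}_n(q)$, this congruence holds iff $m\mid a_i$. The assumption $1\le a_i < m$ precludes this, so $\gcd(q^{a_i}-1,n)=1$ for every $i=1,\ldots,r$, and in particular the multiplicative inverses $(q^{a_i}-1)^{-1}\bmod n$ occurring in the definitions of the $t_j$ all exist. With this fact in hand, every remaining hypothesis of Theorem~\ref{teo1} appears verbatim in the statement of the corollary: $q\ge 3$ is a prime power, $n>m$, $\gcd(q,n)=1$, $m\ge r+2$, $a_1,\ldots,a_r\in[1,m)$, and $n\mid\gcd(t_2,\ldots,t_r)$. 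Invoking Theorem~\ref{teo1} then yields the desired $[n,\,n-m^{*},\,d\ge r+2]_q$ cyclic code, where $m^{*}=|C_x|$ for the $q$-coset $C_x$ containing the consecutive integers $x,x+1,\ldots,x+r$.

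I do not expect any substantial obstacle; the only thing that could confuse a reader is the role of the extra hypothesis $1\le a_1+a_2+\cdots+a_r<m$, which is strictly stronger than the per-index bound $a_i<m$ actually used above. I would briefly remark that this sum condition is not needed for the reduction to Theorem~\ref{teo1}, but can be understood as ensuring that the product of the congruences in the proof of that theorem, namely $x(q^{a_1+\cdots+a_r}-1)\equiv r\pmod n$, itself has a unique solution (by the same primality-plus-order argument applied to the exponent $a_1+\cdots+a_r$). The main care point in writing the proof is simply to make explicit that primality of $n$ is what turns the gcd hypothesis of Theorem~\ref{teo1} into a triviality.
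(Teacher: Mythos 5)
Your proof is correct and follows essentially the same route as the paper: both reduce the corollary to Theorem~\ref{teo1} by observing that primality of $n$ together with $1\leq a_i<m={\operatorname{ord}}_n(q)$ forces $q^{a_i}\not\equiv 1 \pmod n$, hence $\gcd(q^{a_i}-1,n)=1$ for every $i$. Your additional remark about the role of the hypothesis $1\leq a_1+\cdots+a_r<m$ goes beyond what the paper says (its proof never uses that condition), but it does not affect the correctness of the reduction.
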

\begin{proof}
Notice that since $n$ is prime, it follows that $\gcd (q^{a_i}-1,
n)=1$ for every $i=1, 2, \ldots, r$, because $a_1, a_2, \ldots , a_r
< m$. We next apply Theorem~\ref{teo1} and the result follows.
\end{proof}

Let $C_{x}$ be the $q$-coset of $x$. We denote by $C_{-x}$ the coset
of $-x$, where $-x$ is taken modulo $n$. With this notation we have:

\begin{theorem}\label{teo2}
Assume all the hypotheses of Theorem~\ref{teo1} hold. Let $C$ be the
cyclic code with defining set $C_{x}$, where $C_{x}$ is a coset
containing $r+1$ consecutive integers. If $C_{x}\neq C_{-x}$ then
there exists an $[[n, n-2m^{*}, d\geq r+2]]_{q}$ quantum code.
\end{theorem}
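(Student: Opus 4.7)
The plan is to reduce the statement to an application of the CSS construction (Lemma~\ref{css}) applied to the classical code produced by Theorem~\ref{teo1}. Concretely, Theorem~\ref{teo1} already furnishes a cyclic $[n, n-m^{*}, d\geq r+2]_{q}$ code $C$ whose defining set is the single $q$-coset $C_{x}$. If I can show that $C^{\perp}\subseteq C$, then Lemma~\ref{css} immediately yields an $[[n, 2(n-m^{*})-n, \geq r+2]]_{q}=[[n, n-2m^{*}, \geq r+2]]_{q}$ quantum code, which is exactly the claimed conclusion. So the entire content of the proof lies in verifying the dual-containment condition under the extra hypothesis $C_{x}\neq C_{-x}$.

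First I would recall the standard description of the dual of a cyclic code in terms of defining sets: if $C$ is cyclic of length $n$ with defining set $T\subseteq\mathbb{Z}_{n}$, then $C^{\perp}$ is cyclic with defining set $T^{\perp}=\mathbb{Z}_{n}\setminus(-T)$, where $-T=\{-t \bmod n : t\in T\}$. Consequently, the inclusion $C^{\perp}\subseteq C$ is equivalent to $T\subseteq T^{\perp}$, i.e., $T\cap(-T)=\emptyset$. In our situation $T=C_{x}$ and $-T=C_{-x}$.

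Next I would use the basic fact that two $q$-cyclotomic cosets modulo $n$ are either identical or disjoint. Therefore the hypothesis $C_{x}\neq C_{-x}$ forces $C_{x}\cap C_{-x}=\emptyset$, which by the previous paragraph means $C^{\perp}\subseteq C$. Finally, I would invoke Lemma~\ref{css} with the parameters $[n, n-m^{*}, d\geq r+2]_{q}$ provided by Theorem~\ref{teo1} to deduce the existence of an $[[n, n-2m^{*}, d\geq r+2]]_{q}$ stabilizer code, concluding the proof.

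There is no serious obstacle here: every step is essentially bookkeeping with defining sets, and the crux $C_{x}\cap C_{-x}=\emptyset$ follows immediately from the disjointness property of cyclotomic cosets once $C_{x}\neq C_{-x}$ is assumed. The only mild subtlety is to state the duality rule for cyclic codes accurately (complement of $-T$, not $-T$ itself) so that the equivalence $C^{\perp}\subseteq C\iff C_{x}\cap C_{-x}=\emptyset$ is justified cleanly; this is the one spot where a careless sign could derail the argument.
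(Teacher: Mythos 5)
Your proof is correct and follows essentially the same route as the paper: reduce to dual-containment of $C$ and then apply Lemma~\ref{css} with the parameters from Theorem~\ref{teo1}. The only difference is that the paper simply cites Lemma~1 of \cite{aly} for the criterion $C^{\perp}\subseteq C \iff C_{x}\cap C_{-x}=\emptyset$, whereas you derive it directly from the defining set of the dual code together with the fact that cyclotomic cosets are equal or disjoint.
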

\begin{proof}
From~\cite[Lemma 1]{aly}, $C$ contains its (Euclidean) dual code
$C^{\perp}$. The dimension and the minimum distance of the
corresponding quantum code follow directly from Theorem~\ref{teo1}
and from Lemma~\ref{css}.
\end{proof}

\section{Examples and Code Comparison}\label{sec4}

\begin{example}\label{exa2}
Consider that $q=5$ and $n=11$; $m={\operatorname{ord}}_{11}(5)=5$.
The $5$-cosets are $C_0 = \{0\}$, $C_1 = \{ 1, 5, 3, 4, 9\}$ and
$C_2 = \{ 2, 10, 6, 8, 7\}$. If $C$ is the cyclic code with defining
set $C_1$, then it is a dual-containing code with parameters $[11,
6, d\geq 4]_{5}$. From Lemma~\ref{css}, one can get an ${[[11, 1,
d\geq 4]]}_{5}$ code. Similarly, take $q=17$ and $n=19$;
$m={\operatorname{ord}}_{19}(17)=9$. If $C$ is the code with
defining set $C_1 = \{ 1, 17, 4, 11, 16, 6, 7, 5, 9\}$ one has an
$[[19, 1, d\geq 5]]_{17}$ quantum code. We have an $[61, 56, d\geq
3]_{9}$ code with defining set $C_8 = \{8, 11, 38, 37, 28\}$. It is
a dual-containing code, so an $[[61, 51, d\geq 3]]_{9}$ quantum code
exists. There exists an $[67, 64, d\geq 3]_{29}$ dual-containing
code $C$ with defining set $C_{12} = \{12, 13, 42\}$. Hence, there
exists an $[[67, 61, d\geq 3]]_{29}$ quantum code. The existence of
an $[35, 31, d\geq 3]_{13}$ dual-containing code $C$ generates an
$[[35, 27, d\geq 3]]_{13}$ quantum code. An $[35, 31, d\geq 3]_{27}$
dual-containing code $C$ with defining set $C_3 = \{3, 11, 17, 4\}$
guarantees the existence of an $[[35, 27, d\geq 3]]_{27}$ quantum
code. An $[73, 70, d\geq 3]_{64}$ dual-containing code with defining
set $C_{21} = \{22, 21, 30\}$ exists, so there exists an $[[73, 67,
d\geq 3]]_{64}$ quantum code.
\end{example}

\begin{example}\label{exa3}
In this example, we construction cyclic codes whose defining set
consists of two $q$-cosets (the idea is the same as that presented
in Theorem~\ref{teo1}). An $[35, 27, d\geq 4]_{27}$ dual-containing
code $C$ with defining set consisting of $C_2$ and $C_3$ ensures the
existence of an $[[35, 19, d\geq 4]]_{27}$ quantum code. Taking the
cosets $C_{14} = \{14, 20, 30 \}$ and $C_{21}$  one has an $[[73,
61, d\geq 4]]_{64}$ code. Similarly, an $[[63, 51, d\geq 3]]_{11}$
code (coset $C_{43}$) and an $[[63, 39, d\geq 4]]_{11}$ code (cosets
$C_{43}$ and $C_{20}$) can be constructed. Analogously, an $[[63,
51, d\geq 3]]_{23}$ and an $[[63, 45, d\geq 4]]_{23}$ code (cosets
$C_4$ and $C_{27}$) can be constructed.
\end{example}

\begin{table}[!hpt]
\begin{center}
\caption{New quantum codes\label{table1}}
\begin{tabular}{|c |}
\hline Parameters of the new codes\\
\hline ${[[11, 1, d\geq 4]]}_{5}$\\
\hline $[[19, 1, d\geq 5]]_{17}$\\
\hline $[[35, 27, d\geq 3]]_{13}$\\
\hline $[[35, 27, d\geq 3]]_{27}$\\
\hline $[[35, 19, d\geq 4]]_{27}$\\
\hline $[[51, 35, d\geq 3]]_{32}$\\
\hline $[[61, 51, d\geq 3]]_{9}$\\
\hline $[[63, 51, d\geq 3]]_{11}$\\
\hline $[[63, 39, d\geq 4]]_{11}$\\
\hline $[[63, 51, d\geq 3]]_{23}$\\
\hline $[[63, 45, d\geq 4]]_{23}$\\
\hline $[[67, 61, d\geq 3]]_{29}$\\
\hline $[[73, 67, d\geq 3]]_{64}$\\
\hline $[[73, 61, d\geq 4]]_{64}$\\
\hline
\end{tabular}
\end{center}
\end{table}

It is usual the comparison of the new code parameters with the ones
presented in the literature. However, it seems that there is no
source available in literature for codes over large alphabets. More
precisely, the procedure adopted in~\cite{aly} does not generate
codes with relatively small length with respect to large alphabets.
In~\cite{laguardia:2014}, it is possible to derive good quantum
codes of minimum distance three only if the length is a prime
number, whereas here we can construct codes whose lengths are not
necessarily prime and with minimum distances greater than three.
Further, in~\cite{laguardia:2009}, only primitive codes were
constructed.

All quantum codes shown in Table~\ref{table1} seem to be new. Recall
that an $[[n, k, d]]_{q}$ quantum code satisfies $k + 2d \leq n + 2$
(QSB). Note that the new $[[67, 61, d\geq 3]]_{29}$ and $[[73, 67,
d\geq 3]]_{64}$ codes have parameters satisfying $n + 2 - k - 2d
\leq 2$; the parameters of the new ${[[11, 1, d\geq 4]]}_{5}$,
$[[35, 27, d\geq 3]]_{13}$ and $[[35, 27, d\geq 3]]_{27}$ codes
satisfy $n + 2 - k - 2d \leq 4$. The new $[[11, 1, d\geq 4]]_{5}$
code is comparable to the $[[17, 9, 4]]_{5}$ code shown in
\cite{edel}, and the new $[[61, 51, d\geq 3]]_{9}$ code is
comparable to the $[[65, 51, 4]]_{9}$ code shown in \cite{edel}.

\section{Final Remarks}\label{sec5}

We have constructed new quantum codes with good parameters by means
of the CSS construction. The existence of such codes are due to the
existence of suitable (classical) cyclic codes whose defining set
consists of only one $q$-coset which contains at least two
consecutive integers. This new method brings new ideas in order to
construct more new quantum (classical) cyclic codes.

\begin{center}
\textbf{Acknowledgements}
\end{center}
This research has been partially supported by the Brazilian Agencies
CAPES and CNPq.

\small

\end{document}